\newtheorem{thm}{Theorem}[section]
\newtheorem{definition}[thm]{Definition}
\newtheorem{Lemma}[thm]{Lemma}
\newcommand{\mathset}[1]{{\left\{#1\right\}}}
\newcommand{\absolute}[1]{\left\lvert#1\right\rvert}
\title{On the Local Ultrametricity of Finite Metric Data
}
\author{Patrick Erik Bradley
\\
Karlsruhe Institute of Technology
\\
Institute of Photogrammetry and Remote Sensing
\\
Englerstr.\ 6
\\
76131 Karlsruhe
\\
Germany
}
\date{\today}
\begin{document}

\maketitle

\newpage
\begin{abstract}
New local ultrametricity measures for finite metric data are proposed through  the viewpoint that their Vietoris-Rips corners are samples from $p$-adic Mumford curves endowed with a Radon measure coming from a regular differential $1$-form. This is experimentally applied to the iris dataset.
\end{abstract}


\emph{Keywords:} local ultrametricity, $p$-adic numbers, finite data, Mumford curves, Vietoris-Rips complex, data analysis

\section{Introduction}

Ultrametricity is appealing for many reasons, and in particular  the simplicty of tree structures encoded in ultrametric spaces seems attractive to data analysts.
Because of this, they would like to see how close to an ultrametric space a given data set is in order to extract something meaningful out of a hierarchical classification of the data.
With this in mind, ultrametricity indices have been proposed, e.g.\ by \cite{RTV1986} or \cite{Murtagh2004}. F.\ 
Murtagh observed experimentally that samples which are sparse and random in hypercubes
become more and more ultrametric as dimension increases, using his  ultrametricity index
\cite{Murtagh2004}.
Explanations for this are given in \cite{Brad-topoUltra} and \cite{Zubarev2014}. 
Also, ultrametricity can be related to topological data analysis \cite{Brad-topoUltra2}, and a corresponding ultrametricity index has a logistic behaviour \cite{Brad-topoUltra3}. This index relies on the Vietoris-Rips complex developped in \cite{Vietoris1927}, and which is important in studying the persistent homology of data. Cf.\ e.g.\ \cite{Zomorodian2010} for a fast construction of the Vietoris-Rips complex.
\newline

The $p$-adic numbers, having an inherent regular hierarchical structure, provide a framework for analysing hierarchical data, and thus $p$-adic encoding methods were devised, \cite{Murtagh2016} or \cite{Brad-MumfDendro}, either in order to bring them closer to ultrametricity or to apply $p$-adic methods to their already existing hierarchical structure.
This leads to the applicability of $p$-adic analysis outlined e.g.\ in \cite{VVZ1994} to the investigation of data.
\newline

The scope of this article is to introduce new measures for \emph{local} ultrametricity, arguing that the clusters appearing as connected components of the Vietoris-Rips graphs are likely to be more ultrametric than the whole dataset, which can be seen in an example case taken from the well-known iris dataset. Whether or not this argument is generally valid or not, the viewpoint induced by this approach leads to the idea that data can be seen as being sampled from Mumford curves. These are $p$-adic compact algebraic manifolds of dimension $1$. Locally, they are holed discs in the $p$-adic number field, on which there is a natural Haar measure. However, the irregular tree structure of the local data leads to a more natural Radon measure coming from an algebraic regular differential $1$-form on the Mumford curve, as constructed in \cite{IndexTopo_p}. There, the subdominant ultrametric associated with a finite metric space is used, which can be calculated with the method of \cite{RTV1986}.
In fact, any ultrametric can be used to approximate the finite metric dataset, i.e.\ any hierarchical classification method can be used in order to obtain an ultrametric in this approach.
\newline

Mumford curves are objects studied in $p$-adic algebraic and rigid geometry, and are extensively covered in \cite{GvP1980} and 
\cite{FvP2004}. What is needed from this relatively deep theory is, however, only the fact that they are algebraic and have an underlying $1$-dimensional compact $p$-adic manifold structure which allows for regular differential $1$-forms $\omega$, which are in fact algebraic. Locally, they are of the form
\[
\omega(x)=f(x)\,dx
\]
with an analytic $p$-adic-valued function $f$ defined on the local piece $U$,
and that these give rise to Radon measures on the Mumford curve outside the zeros of $\omega$.

\section{Finite locally ultrametric spaces}

After defining local ultrametrics in the following subsection,  and local $p$-adic encodings of data via tree embeddings, new invariants of a finite metric space are defined via the Vietoris-Rips graphs and  associating Mumford curves and Radon measures to local pieces in the last subsection of this section.

\subsection{Local ultrametrics}
Let $X$ be a finite set with a metric $d$ on it.
Fix $\epsilon>0$, and let
$\Gamma_\epsilon$ be the associated Vietoris-Rips graph with vertex set $X$. Let 
\[
d_\epsilon\colon X\times X\to\mathds{R}_{\ge0}
\]
be the partial function which on each connected component $C$ of $\Gamma_\epsilon$ is an  ultrametric dominated by $d$.
One can use for $d_\epsilon$ e.g.\ the subdominant
corresponding to the distance on $X$ restricted to $C\times C$. But any other ultrametric  dominated by $d$ can also be used.
Certain hierarchical clustering methods provide such an ultrametric, among which single-linkage clustering yields the subdominant ultrametric.
\newline

Let
$\mathcal{C}(\Gamma_\epsilon)$ be the set of connected components of $\Gamma_\epsilon$. Define
a distance $d'_\epsilon$ on $\mathcal{C}(\Gamma_\epsilon)$ 
as
\[
d'_\epsilon(C,C')
=\min\mathset{\epsilon'\mid\epsilon'\ge\epsilon\colon
\exists\;\text{an edge in $\Gamma_{\epsilon'}$ connecting $C$ and $C'$}}
\]
whenever $C\neq C'$.
Then define the function
\[
\delta_\epsilon
\colon X\times X\to\mathds{R}_{\ge0},\;(x,y)\mapsto
\begin{cases}
d_\epsilon(x,y),&\exists\, C\in\mathcal{C}(\Gamma_\epsilon)\colon x,y\in C
\\
d'_\epsilon(C(x),C(y)),&
C(x)\neq C(y)
\end{cases}
\]
where $C(z)\in\mathcal{C}(\Gamma_\epsilon)$ is the connected component containing $z\in X$.
Clearly, $\delta_\epsilon$ is a distance on $X$.

\begin{definition}
The distance $\delta_\epsilon$ is called a \emph{local ultrametric} on $X$. The pair $(X,\delta_\epsilon)$ is called a \emph{locally ultrametric space}.
\end{definition}

A criterion for ultrametricity in terms of the Vietoris-Rips graphs is given in \cite[Lem.\ 2.2]{Brad-topoUltra}: the
connected components of the Vietoris-Rips graphs are always cliques iff dataset is ultrametric. The  subdominant ultrametric can also be described in terms of the Vietoris-Rips graphs, cf.\ \cite[Prop.\ 5.2]{Brad-topoUltra}.

\subsection{Local $p$-adic encodings}

In \cite[\S 3.3]{IndexTopo_p} a Radon measure on a compact open subset of $\mathds{Q}_p$ is constructed from a finite ultrametric space. Here, an embedding of the corresponding ultrametric tree into the Bruhat-Tits tree of a suitable $p$-adic number field necessary for that method is constructed in a more precise manner. This produces a $p$-adic data encoding, as already observed in \cite{Brad-MumfDendro}.
\newline

Let $C\in\mathcal{C}(\Gamma_\epsilon)$ be given, and view $(C,d_\epsilon)$ as
an independent ultrametric space for the moment.
The set $\mathcal{B}(C)$ of all non-trivial balls on $C$ is a finite poset with precisely one top element $C$, and in fact is a tree. Let
\[
\rho\colon\mathcal{B}(C)\to\mathds{R}_{>0},
B\mapsto\text{radius of $B$}
\]
whose image $R(C)=\rho(\mathcal{B}(C))$ is a finite ordered set of real  numbers. Order this set with a function
\[
\varphi\colon R(X)\to\mathds{N}
\]
in decreasing order with consecutive natural numbers beginning in $0$. 
Fix a prime number $p$, and let
\[
m=\max\varphi
\]
and assign to each $c\in C$ a distinct disc $a_c+p^{(m+1)}\mathds{Z}_p$ inside the ring $\mathds{Z}_p$ of $p$-adic integers inside the field of $p$-adic numbers $\mathds{Q}_p$, where $p$ is bounded from below by the maximal number of children in any ultrametric tree of $(C,\delta_\epsilon)$ for any $C\in \mathcal{C}(\Gamma_\epsilon)$ plus the number of elements in $\mathcal{C}(\Gamma_\epsilon)$.  Assume thereby that  all discs in $\mathds{Z}_p$ have equal radius
\[
p^{-(m+1)}
\]
for this assignment.
The condition about $p$  
enables an embedding of any spanning tree of the graph $\Gamma_\epsilon$ into the Bruhat-Tits tree for $\mathds{Q}_p$. The latter tree is explained e.g.\ in \cite{Brad-dendrofam}.
\newline

The ultrametric diffusion considered in \cite{IndexTopo_p} necessitated the replacement of the $p$-adic Haar measure on the compact open set obtained by such an embedding as above with a Radon measure $\nu_C$ which ensures that the volume of any disc corresponding to a vertex $B\in\mathcal{B}(C)$ in the ultrametric tree for $(c,d_\epsilon)$ is equally distributed among the child vertices of $B$, cf.\ \cite[Lem.\ 3.8]{IndexTopo_p}, where such a Radon measure is constructed on the subset
\[
\Omega_C=\bigsqcup\limits_{c\in C}\left(a_c+p^{m+1}\mathds{Z}_p\right)
\]
of $\mathds{Q}_p$.

\begin{definition}
The measure $\nu_C$ is called the \emph{equity} measure on $\Omega_C$ induced by $(C,d_\epsilon)$.
\end{definition}

Now, \cite[Lem.\ 3.9]{IndexTopo_p} shows that $\nu_C$ is of the form
\[
\nu(x)=\phi(\absolute{f_C(x)})\,\absolute{dx}_p
\]
where $\absolute{dx}_p$ is the Haar measure on $\mathds{Q}_p$, $f_C\in \mathds{Q}_p[X]$ a polynomial nowhere vanishing on $\Omega_C$,
and  $\phi\colon p^{\mathds{Z}}\to\mathds{R}_{>0}$ a strictly increasing function. The proof of \cite[Lem.\ 3.9]{IndexTopo_p}
uses $p$-adic polynomial interpolation.

\subsection{Invariants of a finite metric space}

Let $\delta\ge\epsilon$.
Define the graph $\Gamma_\epsilon^\delta$ whose vertex set is $\mathcal{C}(\Gamma_\epsilon)$, and its edges are pairs $(C,C')$ with $C\neq C'$ and $d(C,C')\le\delta$.
We call this the \emph{coarse $\epsilon$-$\delta$ graph} of $(X,d)$.
\newline

For $x\in X$ define
\[
C_\epsilon^\delta(x)=\text{the connected comp.\ of $\Gamma_\epsilon^\delta$ containing $C(x)\in\mathcal{C}(\Gamma_\epsilon)$}
\]
where $C(x)\in\mathcal{C}(\Gamma_\epsilon)$ is the connected component of $\Gamma_\epsilon$ containing $x\in X$.
The \emph{local $\epsilon$-$\delta$-genus} is the function
\[
g_\epsilon^\delta\colon X\to\mathds{N},\;x\mapsto b_1(C_\epsilon^\delta(x))
\]
where $C_\epsilon^\delta$ is viewed as a subgraph of $\Gamma_\epsilon^\delta$.

\begin{Lemma}
It holds true that
\[
0\le g_\epsilon^\delta(x)\le\frac12\absolute{C_\epsilon^\delta(x)}^2-\frac32\absolute{C_\epsilon^\delta(x)}+1
\]
for $\delta\ge\epsilon>0$.
$(X,d)$ is ultrametric, if and only if for all $\delta\ge\epsilon>0$, the right hand side is an equality.
\end{Lemma}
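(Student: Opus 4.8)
The plan is to treat the two assertions separately, observing that the whole statement reduces to elementary facts about the first Betti number of the finite connected graph $C_\epsilon^\delta(x)$, combined with the clique criterion \cite[Lem.\ 2.2]{Brad-topoUltra}.

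For the chain of inequalities I would recall that a connected graph $G$ with $n$ vertices and $m$ edges has $b_1(G)=m-n+1$. Since $C_\epsilon^\delta(x)$ is a connected component of $\Gamma_\epsilon^\delta$, it is connected, and with $n=\absolute{C_\epsilon^\delta(x)}$ the formula applies. Connectedness forces $m\ge n-1$, giving $g_\epsilon^\delta(x)=b_1\ge 0$; and $m\le\binom{n}{2}$ gives $g_\epsilon^\delta(x)\le\binom{n}{2}-n+1=\frac12 n^2-\frac32 n+1$, which is precisely the stated right-hand side. The crucial observation for the second assertion is that equality holds if and only if $m=\binom{n}{2}$, that is, if and only if $C_\epsilon^\delta(x)$ is a \emph{complete} graph.

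Next I would prove a translation lemma: the union of all $\Gamma_\epsilon$-components that are vertices of $C_\epsilon^\delta(x)$ coincides with the connected component of the Vietoris-Rips graph $\Gamma_\delta$ containing $x$. One inclusion holds because consecutive vertices along a path in $\Gamma_\epsilon^\delta$ contain a pair of points at distance $\le\delta$, while each $\Gamma_\epsilon$-component is itself $\delta$-connected since $\epsilon\le\delta$; the reverse inclusion follows by projecting a $\Gamma_\delta$-path onto the induced walk on $\mathcal{C}(\Gamma_\epsilon)$. Granting this, $C_\epsilon^\delta(x)$ is complete exactly when every two distinct $\Gamma_\epsilon$-components inside this $\Gamma_\delta$-component lie at mutual distance $\le\delta$.

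For the equivalence, the forward direction is immediate from the ultrametric inequality: if $(X,d)$ is ultrametric, then by \cite[Lem.\ 2.2]{Brad-topoUltra} every $\Gamma_\delta$-component is a clique, so its points are pairwise within $\delta$; in particular any two $\Gamma_\epsilon$-components inside it are at distance $\le\delta$, so $C_\epsilon^\delta(x)$ is complete and equality holds for all $\delta\ge\epsilon>0$. The converse is where care is needed, and is the main obstacle: here I would specialize $\epsilon$, choosing $0<\epsilon<\min\mathset{d(x,y)\mid x\ne y}$ so that every $\Gamma_\epsilon$-component is a single point. Then the assumed completeness of $C_\epsilon^\delta(x)$ for all $\delta\ge\epsilon$ says exactly that every $\Gamma_\delta$-component is a clique for all such $\delta$, while for the remaining range $0<\delta<\epsilon$ the graph $\Gamma_\delta$ has no edges and its components are trivially cliques. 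Hence all Vietoris-Rips components are cliques, and \cite[Lem.\ 2.2]{Brad-topoUltra} delivers ultrametricity.
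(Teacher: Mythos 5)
Your proof is correct, and it rests on the same two pillars as the paper's: the formula $b_1=m-n+1$ for a connected graph on $n$ vertices with $m$ edges (so the upper bound $\binom{n}{2}-n+1=\frac12 n^2-\frac32 n+1$ is attained exactly on complete graphs), and the clique criterion of \cite[Lem.\ 2.2]{Brad-topoUltra}. Where you genuinely diverge is in how that criterion is brought to bear for the equivalence. The paper does it in one line, by applying the criterion to the quotient metric on $\mathcal{C}(\Gamma_\epsilon)$ induced by $d$, so that the coarse graphs $\Gamma_\epsilon^\delta$ are read as Vietoris--Rips graphs of the quotient space; this is slick, but it leaves two points implicit: the induced set-distance on $\mathcal{C}(\Gamma_\epsilon)$ need not satisfy the triangle inequality when $(X,d)$ is not ultrametric (so one is invoking a criterion formulated for metric spaces on something that may not be one), and the transfer of ultrametricity from the quotients back to $(X,d)$ itself is not spelled out. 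Your route sidesteps both issues by staying inside the original space: the translation lemma identifies the union of the $\Gamma_\epsilon$-components constituting $C_\epsilon^\delta(x)$ with the $\Gamma_\delta$-component of $x$ in $X$, which gives the forward direction directly from the strong triangle inequality, and the specialization $0<\epsilon<\min\mathset{d(x,y)\mid x\neq y}$ turns the completeness hypothesis into the statement that every component of every $\Gamma_\delta$ is a clique, so that \cite[Lem.\ 2.2]{Brad-topoUltra} applies verbatim to $(X,d)$. The price is length; the gain is that your converse is actually justified step by step, filling in exactly the details that the paper's ``follows immediately'' compresses.
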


\begin{proof}
The $\epsilon$-$\delta$-genus is non-negative and is maximal, when $C_\epsilon^\delta(x)$ is a complete graph, in which case the first Betti number equals the right hand side of the asserted inequality. The last statment now follows immediately from \cite[Lem.\ 2.2]{Brad-topoUltra} applied to the quotient metric on the set $\mathcal{C}(\Gamma_\epsilon)$ induced by $d$.
\end{proof}

Such a connected component $C=C_\epsilon^\delta(x)$ for $x\in X$ can be viewed as a coarse graph structure on the connected components of $\Gamma_\epsilon$. 
Now, replacing, as in the previous subsection, the  distance $d$, restricted to 
each element of $\mathcal{C}(\Gamma_\epsilon)$ which is contained in $C$,
with an ultrametric, leads to a local tree structure on $C$.  
Now, the previous subsection
tells us that locally, there is a Radon measure $\nu(x)$ coming from a $p$-adic differential $1$-form $\omega_\epsilon$ which is algebraic. The local pieces can now be ``glued'' to a covering of the $p$-adic points of a
Mumford curve $\mathcal{C}_\epsilon$ minus the zeros of the regular algebraic differential $1$-form
$\omega_\epsilon$, whose genus equals the first Betti number of $C$. Notice that the gluing process takes place beyond the mere points whose coordinates are in $\mathds{Q}_p$, in the category of $p$-adic rigid analytic spaces, as laid out e.g.\ in \cite[Ch.\ 5]{FvP2004}. The method from the previous subsection fills a gap of \cite[\S 3.3]{IndexTopo_p} by explicitly constructing the equity measure on the Mumford curve, which according to \cite[Lem.\ 3.9]{IndexTopo_p} comes from an algebraic differential $1$-form $\omega_\epsilon$. 
\newline

Hence, apart from the local $\epsilon$-$\delta$-genus, there is also the equity measure $\absolute{\omega_\epsilon}_p$ on each connected component of $\Gamma_\epsilon^\delta$ as a further set of invariant of the dataset $(X,d)$. 
As another invariant, we suggest also the minimal value $\delta$ for which $\Gamma_\epsilon^\delta$  is connected, together with the now global $\epsilon$-$\delta$-genus and equity measure for this $\delta$.

\section{Experiments}

The iris dataset was investigated. For $\epsilon\ge1.65$, the Vietoris-Rips graph $\Gamma_\epsilon$ becomes connected. Figure \ref{vrgs}
shows Vietoris-Rips graphs for four selected values of $\epsilon$. Taking as reference $\epsilon=0.64$ leads to five clusters (i.e.\ connected components of $\Gamma_\epsilon$) having distance matrix as in Table \ref{clusterdist064}. 
Reference $\epsilon=0.7$ leads to four clusters with distance matrix as in Table \ref{clusterdist07}.
The clusters $C_2,C_3$ of the first graph merge to cluster $C$ of the second graph. Figure \ref{MumfordGraphs} shows the two quotient graphs $\Gamma_{\epsilon}^\delta$ for these values of $\epsilon$, and for the corresponding minimal $\delta>\epsilon$, for which the associated Mumford curve is connected.
In the first case, the 
genus is one, and in the second case it is zero.
\newline

Our choice of an ultrametricity is biased by \cite[\S 3.3]{Murtagh2004} who introduced his ultrametricity index also because of the chaining effect problem of the ultrametricity index from \cite{RTV1986}. 
The values of the Murtagh ultrametricity index of the not too small clusters are given in Table \ref{MurtaghIndex} and indicate that they are more ultrametric than the whole iris dataset whose Murtagh index is given as $0.0162$ in \cite[\S 3.3]{Murtagh2004}.
\newline

The method of \cite[\S 3.3]{Murtagh2004} of calculating the Murtagh ultrametricity index consists in 
 counting almost ultrametric triangles as follows:

\begin{enumerate}
\item Randomly sample the coordinates for triples of three distinct points.

\item Check for possible  degenerate triangles and exclude these.

\item The cosine of the angle facing a side of
length $x$ is: 
\[
\frac{y^2 + z^2 - x^2}
{2yz} 
\]
where $y,z$ are the other side lengths.
\item For the two other angles seek an angular difference of at most  2 degrees (0.03490656 radians).
\end{enumerate}
Murtagh's ultrametricity index is then the fraction $\alpha$ of such almost ultrametric triangles.
\newline

\begin{table}[ht]
\[
\begin{array}{|c||c|c|c|c|c|}\hline
\epsilon=0.64&C_1&C_2&C_3&C_4&C_5\\\hline\hline
C_1&0&2.08&1.64&3.14&5.46\\\hline
C_2&2.08&0&0.648&0.735&0.819\\\hline
C_3&1.64&0.648&0&1.32&4.59\\\hline
C_4&3.14&0.735&1.32&0&3.79\\\hline
C_5&5.46&0.819&4.59&3.79&0
\\\hline
\end{array}
\]
\caption{Cluster distances at $\epsilon=0.64$.}\label{clusterdist064}
\end{table}

\begin{table}[ht]
\[
\begin{array}{|c||c|c|c|c|}\hline
\epsilon=0.7&C_1&C&C_4&C_5\\\hline\hline
C_1&0&1.64&3.14&5.46\\\hline
C&1.64&0&0.735&0.819\\\hline
C_4&3.14&0.835&0&3.79\\\hline
C_5&5.46&0.819&3.79&0\\\hline
\end{array}
\]
\caption{Cluster distances at $\epsilon=0.7$.}\label{clusterdist07}
\end{table}

\begin{table}[h]
\[
\begin{array}{|c||c|c|c|}\hline
\text{cluster}&C_1&C_2&C
\\\hline
\text{Murtagh index}&0.026&0.11&0.12
\\\hline
\end{array}
\]
\caption{The Murtagh ultrametricity index values for the larger clusters in the iris dataset.}\label{MurtaghIndex}
\end{table}

The $3$-adic Radon measure leading to the equity measure on one cluster is now computed as an exemplary study of the iris dataset.
The subdominant ultrametric of cluster $C_3$ in $\Gamma_{0.64}$ can be depicted as the dendrogram w.r.t.\ single-linkage clustering, and is shown in Figure \ref{dendro_c3}.
In the case of $p=3$,
assign now to each point of $C_3$ a $3$-adic ball of radius $3^{-2}$, centred in
\[
0,\quad 1,\quad
1+3^2,\quad 
1+3^2+3^3,
\]
a set which has the $3$-adic tree structure of Figure \ref{dendro_c3}.
The equity measure
leads to the assignment
\[
0\mapsto \frac12,\quad
1\mapsto\frac14,\quad
1+3^2\mapsto\frac18,\quad
1+3^2+3^3\mapsto\frac18
\]
leading to the $3$-adic interpolation problem
\[
f(0)=1,\quad f(1)=3,\quad
f(1+3^2)=3^2,
\quad
f(1+3^2+3^3)=3^2
\]
with the solution
\[
f(X)=\frac{31}{9990}X^3-\frac{1683}{9990}X^2+\frac{10811}{4995}X+1
\in\mathds{Q}_3[X]
\]
for the new measure
\[
\absolute{\omega(x)}_3=\absolute{f(x)}_3\,\absolute{dx}_3
\]
approximating the equity measure for $x\in C_3$ with $\epsilon=0.64$.

\begin{figure}[ht]
\begin{center}
\begin{tabular}{|c|c|}\hline
\includegraphics[scale=.3]{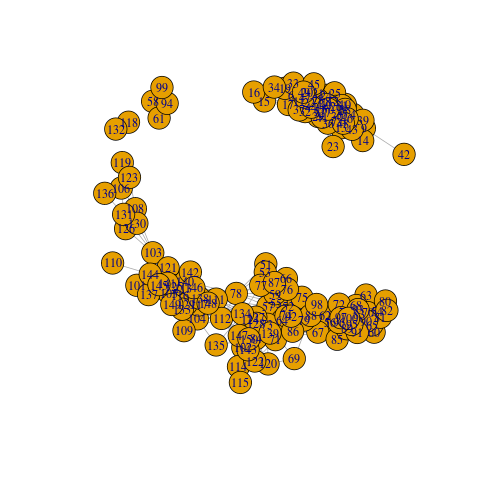}&\includegraphics[scale=.3]{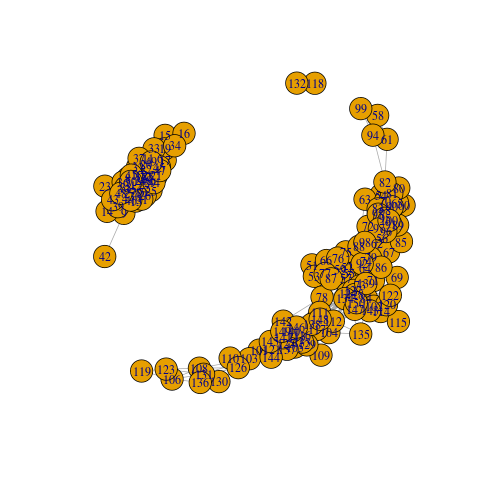}
\\\hline
\includegraphics[scale=.3]{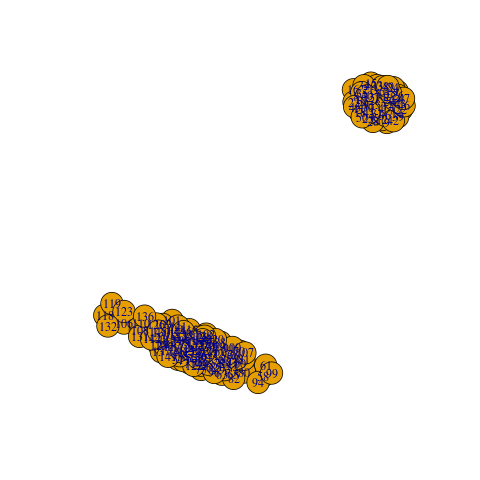}
&\includegraphics[scale=.3]{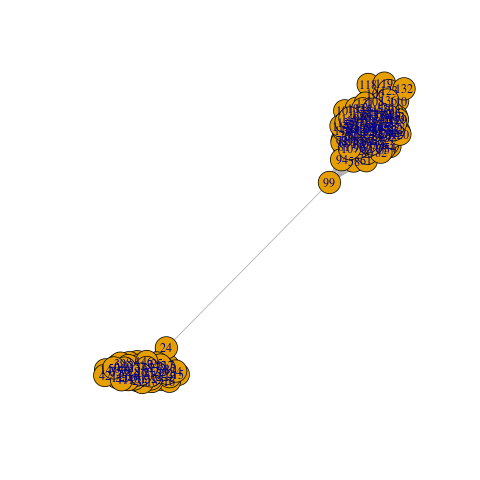}
\\\hline
\end{tabular}
\end{center}
\caption{Vietoris-Rips graphs $\Gamma_\epsilon$ for $\epsilon=0.64$ (top left), $\epsilon=0.7$ (top right), $\epsilon=1.640$ (bottom left), $\epsilon=1.650$ (bottom right). Only the non-singleton connected components are shown.}\label{vrgs}
\end{figure}

\begin{figure}[ht]
\[
\xymatrix@C=40pt{
C_1\ar@{-}[drr]_(.25){\!1.64}&C_2\ar@{-}[dr]^{0.648}\ar@{-}[dd]^(.55){0.735}\ar@{-}[ddl]^(.6){\!0.819}
\\
&&C_3\ar@{-}[dl]^{1.32}
\\
C_5&C_4
}
\qquad\qquad
\xymatrix@=50pt{
C_1\ar@{-}[r]^{1.64}&C\ar@{-}[d]^{0.735}\ar@{-}[dl]^(.52){\!0.819}
\\
C_5&C_4
}
\]
\caption{The coarse graphs $\Gamma_\epsilon^\delta$ for $(\epsilon,\delta)=(0.64,1.64)$ (left), and $(\epsilon,\delta)=(0.7,1.64)$ (right).
}\label{MumfordGraphs}
\end{figure}
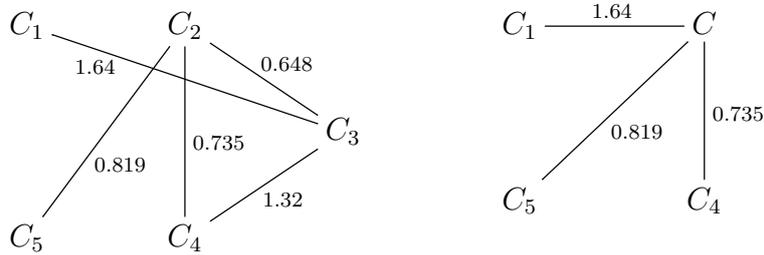

\begin{figure}[ht]
\begin{center}
\includegraphics[scale=.5]{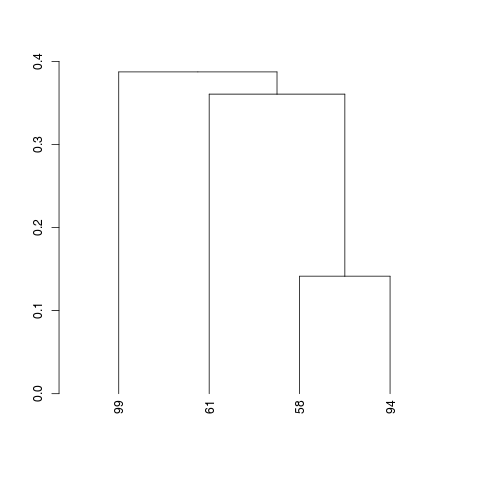}
\caption{Single-linkage dendrogram of the cluster $C_3$ in $\Gamma_{0.64}$.}
\end{center}\label{dendro_c3}
\end{figure}

\section{Conclusion}

In the theoretical part of this work, a local ultrametric is associated with a finite metric space (given the meaning of a dataset) via its Vietoris-Rips graph. The equity measure was defined on the local ultrametric parts of the space which, by the result of recent previous work, can be seen as coming from a differential $1$-form on a $p$-adic Mumford curve.
These can be viewed as compact algebraic $p$-adic manifolds, and this approach leads to new invariants for finite metric data by taking a double filtration with $\epsilon$- and $\delta$-balls in the finite metric.
In the experimental part of this work, it is conceivable
from the findings with the iris dataset, that local ultrametricity could increase within clusters of the Vietoris-Rips graphs in comparison with the ultrametricity of the whole dataset. 
This resembles \emph{Simpson's paradox} in statistics. 
Even if this may not be always the case for general datasets,  this nevertheless provides a means for classifying different types of datasets.
In the iris data case, example Vietoris-Rips graphs were taken and some values of the new invariants calculated. The findings suggest that the double filtration approach can reveal more inherent  topological properties of data in their ultrametric approximation via $p$-adic encoding. This suggests that in the future, a hierarchical or $p$-adic version of manifold learning could emerge from further investigations in this direction, which is appealing because of its potential for a reduced computational complexity.

\section*{Acknowledgements}


\section*{Data Availability}
This research has used publicly available data only. The $R$ code can be made available upon request.

\bibliographystyle{plain}
\bibliography{biblio}

\begin{thebibliography}{10}

\bibitem{Brad-dendrofam}
P.E. Bradley.
\newblock Degenerating families of dendrograms.
\newblock {\em Journal of Classification}, 25(1):27--42, 2006.

\bibitem{Brad-MumfDendro}
P.E. Bradley.
\newblock Mumford dendrograms.
\newblock {\em The Computer Journal}, 53(4):393--404, 2010.

\bibitem{Brad-topoUltra}
P.E. Bradley.
\newblock Ultrametricity indices for the {Euclidean and Boolean} hypercubes.
\newblock {\em $p$-Adic Numbers, Ultrametric Analysis, and Applications},
  8(4):298--311, 2016.

\bibitem{Brad-topoUltra2}
P.E. Bradley.
\newblock Finding ultrametricity in data using topology.
\newblock {\em Journal of Classification}, 34:76--84, 2017.

\bibitem{Brad-topoUltra3}
P.E. Bradley.
\newblock On the logistic behaviour of the topological ultrametricity of data.
\newblock {\em Journal of Classification}, 36:266--272, 2019.

\bibitem{IndexTopo_p}
P.E. Bradley and \'A. {Mor\'an Ledezma}.
\newblock Approximating diffusion on finite multi-topology systems using
  ultrametrics.
\newblock in preparation.

\bibitem{FvP2004}
J.~Fresnel and M.~{van der Put}.
\newblock {\em Rigid Analytic Geometry and Its Applications}.
\newblock Progress in Mathematics. Birkh\"auser, Boston, 2004.

\bibitem{GvP1980}
L.~Gerritzen and M.~{van der Put}.
\newblock {\em Schottky Groups and Mumford Curves}, volume 817 of {\em Lecture
  Notes in Mathematics}.
\newblock Springer, Heidelberg, New York, 1980.

\bibitem{Murtagh2004}
F.~Murtagh.
\newblock On ultrametricity, data coding, and computation.
\newblock {\em J. Class.}, 21:167--184, 2004.

\bibitem{Murtagh2016}
F.~Murtagh.
\newblock Sparse $p$-adic data coding for computationally efficient and
  effective big data analytics.
\newblock {\em $p$-Adic Numbers, Ultrametric Analysis and Applications},
  8:236--247, 2016.

\bibitem{RTV1986}
R.~Rammal, G.~Toulouse, and M.A. Virasoro.
\newblock Ultrametricity for physicists.
\newblock {\em Reviews of Modern Physics}, 58(3):765--788, 1986.

\bibitem{Vietoris1927}
L.~Vietoris.
\newblock {Über den höheren Zusammenhang kompakter Räume und eine Klasse von
  zusammenhangstreuen Abbildungen}.
\newblock {\em Math. Ann.}, 97(1):454--472, 1927.

\bibitem{VVZ1994}
V.S. Vladimirov, I.V. Volovich, and E.I. Zelenov.
\newblock {\em $p$-adic Analysis and Mathematical Physics}.
\newblock Series on Soviet and East European Mathematics, 1. World Scientific
  Publishing Co., Inc., River Edge, NJ, 1994.

\bibitem{Zomorodian2010}
A.~Zomorodian.
\newblock Fast construction of the {Vietoris-Rips} complex.
\newblock {\em Comp. \& Graph.}, 34(3):263--271, 2010.

\bibitem{Zubarev2014}
A.P. Zubarev.
\newblock On stochastic generation of ultrametrics in high-dimensional
  {Euclidean} spaces.
\newblock {\em $p$-Adic Numbers, Ultrametric Analysis, and Applications},
  6(2):155--165, 2014.

\end{thebibliography}

\end{document}